\numberwithin{equation}{section}
\newtheorem{theorem}{Theorem}[section]
\theoremstyle{definition}
\newtheorem{definition}[theorem]{Definition}
\theoremstyle{remark}
\newtheorem*{remark}{Remark}
\newtheorem*{conjecture}{Conjecture}
\begin{document}
\newcommand{\M}{\mathcal{M}}
\newcommand{\F}{\mathcal{F}}

\newcommand{\Teich}{\mathcal{T}_{g,N+1}^{(1)}}
\newcommand{\T}{\mathrm{T}}
\newcommand{\corr}{\bf}
\newcommand{\vac}{|0\rangle}
\newcommand{\Ga}{\Gamma}
\newcommand{\new}{\bf}
\newcommand{\define}{\def}
\newcommand{\redefine}{\def}
\newcommand{\Cal}[1]{\mathcal{#1}}
\renewcommand{\frak}[1]{\mathfrak{{#1}}}
\newcommand{\Hom}{\rm{Hom}\,}
\newcommand{\refE}[1]{(\ref{E:#1})}
\newcommand{\refCh}[1]{Chapter~\ref{Ch:#1}}
\newcommand{\refS}[1]{Section~\ref{S:#1}}
\newcommand{\refSS}[1]{Section~\ref{SS:#1}}
\newcommand{\refT}[1]{Theorem~\ref{T:#1}}
\newcommand{\refO}[1]{Observation~\ref{O:#1}}
\newcommand{\refP}[1]{Proposition~\ref{P:#1}}
\newcommand{\refD}[1]{Definition~\ref{D:#1}}
\newcommand{\refC}[1]{Corollary~\ref{C:#1}}
\newcommand{\refL}[1]{Lemma~\ref{L:#1}}
\newcommand{\refEx}[1]{Example~\ref{Ex:#1}}
\newcommand{\R}{\ensuremath{\mathbb{R}}}
\newcommand{\C}{\ensuremath{\mathbb{C}}}
\newcommand{\N}{\ensuremath{\mathbb{N}}}
\newcommand{\Q}{\ensuremath{\mathbb{Q}}}
\renewcommand{\P}{\ensuremath{\mathcal{P}}}
\newcommand{\Z}{\ensuremath{\mathbb{Z}}}
\newcommand{\kv}{{k^{\vee}}}
\renewcommand{\l}{\lambda}
\newcommand{\gb}{\overline{\mathfrak{g}}}
\newcommand{\dt}{\tilde d}     
\newcommand{\hb}{\overline{\mathfrak{h}}}
\newcommand{\g}{\mathfrak{g}}
\newcommand{\h}{\mathfrak{h}}
\newcommand{\gh}{\widehat{\mathfrak{g}}}
\newcommand{\ghN}{\widehat{\mathfrak{g}_{(N)}}}
\newcommand{\gbN}{\overline{\mathfrak{g}_{(N)}}}
\newcommand{\tr}{\mathrm{tr}}
\newcommand{\gln}{\mathfrak{gl}(n)}
\newcommand{\son}{\mathfrak{so}(n)}
\newcommand{\spnn}{\mathfrak{sp}(2n)}
\newcommand{\sln}{\mathfrak{sl}}
\newcommand{\sn}{\mathfrak{s}}
\newcommand{\so}{\mathfrak{so}}
\newcommand{\spn}{\mathfrak{sp}}
\newcommand{\tsp}{\mathfrak{tsp}(2n)}
\newcommand{\gl}{\mathfrak{gl}}
\newcommand{\slnb}{{\overline{\mathfrak{sl}}}}
\newcommand{\snb}{{\overline{\mathfrak{s}}}}
\newcommand{\sob}{{\overline{\mathfrak{so}}}}
\newcommand{\spnb}{{\overline{\mathfrak{sp}}}}
\newcommand{\glb}{{\overline{\mathfrak{gl}}}}
\newcommand{\Hwft}{\mathcal{H}_{F,\tau}}
\newcommand{\Hwftm}{\mathcal{H}_{F,\tau}^{(m)}}

\newcommand{\car}{{\mathfrak{h}}}    
\newcommand{\bor}{{\mathfrak{b}}}    
\newcommand{\nil}{{\mathfrak{n}}}    
\newcommand{\vp}{{\varphi}}
\newcommand{\bh}{\widehat{\mathfrak{b}}}  
\newcommand{\bb}{\overline{\mathfrak{b}}}  
\newcommand{\Vh}{\widehat{\mathcal V}}
\newcommand{\KZ}{Kniz\-hnik-Zamo\-lod\-chi\-kov}
\newcommand{\TUY}{Tsuchia, Ueno  and Yamada}
\newcommand{\KN} {Kri\-che\-ver-Novi\-kov}
\newcommand{\pN}{\ensuremath{(P_1,P_2,\ldots,P_N)}}
\newcommand{\xN}{\ensuremath{(\xi_1,\xi_2,\ldots,\xi_N)}}
\newcommand{\lN}{\ensuremath{(\lambda_1,\lambda_2,\ldots,\lambda_N)}}
\newcommand{\iN}{\ensuremath{1,\ldots, N}}
\newcommand{\iNf}{\ensuremath{1,\ldots, N,\infty}}

\newcommand{\tb}{\tilde \beta}
\newcommand{\tk}{\tilde \varkappa}
\newcommand{\ka}{\kappa}
\renewcommand{\k}{\varkappa}
\newcommand{\ce}{{c}}

\newcommand{\Pif} {P_{\infty}}
\newcommand{\Pinf} {P_{\infty}}
\newcommand{\PN}{\ensuremath{\{P_1,P_2,\ldots,P_N\}}}
\newcommand{\PNi}{\ensuremath{\{P_1,P_2,\ldots,P_N,P_\infty\}}}
\newcommand{\Fln}[1][n]{F_{#1}^\lambda}
\newcommand{\tang}{\mathrm{T}}
\newcommand{\Kl}[1][\lambda]{\can^{#1}}
\newcommand{\A}{\mathcal{A}}
\newcommand{\U}{\mathcal{U}}
\newcommand{\V}{\mathcal{V}}
\newcommand{\W}{\mathcal{W}}
\renewcommand{\O}{\mathcal{O}}
\newcommand{\Ae}{\widehat{\mathcal{A}}}
\newcommand{\Ah}{\widehat{\mathcal{A}}}
\newcommand{\La}{\mathcal{L}}
\newcommand{\Le}{\widehat{\mathcal{L}}}
\newcommand{\Lh}{\widehat{\mathcal{L}}}
\newcommand{\eh}{\widehat{e}}
\newcommand{\Da}{\mathcal{D}}
\newcommand{\kndual}[2]{\langle #1,#2\rangle}
\newcommand{\cins}{\frac 1{2\pi\mathrm{i}}\int_{C_S}}
\newcommand{\cinsl}{\frac 1{24\pi\mathrm{i}}\int_{C_S}}
\newcommand{\cinc}[1]{\frac 1{2\pi\mathrm{i}}\int_{#1}}
\newcommand{\cintl}[1]{\frac 1{24\pi\mathrm{i}}\int_{#1 }}
\newcommand{\w}{\omega}
\newcommand{\ord}{\operatorname{ord}}
\newcommand{\res}{\operatorname{res}}
\newcommand{\nord}[1]{:\mkern-5mu{#1}\mkern-5mu:}
\newcommand{\codim}{\operatorname{codim}}
\newcommand{\ad}{\operatorname{ad}}
\newcommand{\Ad}{\operatorname{Ad}}
\newcommand{\supp}{\operatorname{supp}}
\newcommand{\semi}{\mathcal{F}^{\infty /2}}

\newcommand{\Fn}[1][\lambda]{\mathcal{F}^{#1}}
\newcommand{\Fl}[1][\lambda]{\mathcal{F}^{#1}}
\renewcommand{\Re}{\mathrm{Re}}

\newcommand{\ha}{H^\alpha}

\define\ldot{\hskip 1pt.\hskip 1pt}
\define\ifft{\qquad\text{if and only if}\qquad}
\define\a{\alpha}
\redefine\d{\delta}
\define\w{\omega}
\define\ep{\epsilon}
\redefine\b{\beta} \redefine\t{\tau} \redefine\i{{\,\mathrm{i}}\,}
\define\ga{\gamma}
\define\cint #1{\frac 1{2\pi\i}\int_{C_{#1}}}
\define\cintta{\frac 1{2\pi\i}\int_{C_{\tau}}}
\define\cintt{\frac 1{2\pi\i}\oint_{C}}
\define\cinttp{\frac 1{2\pi\i}\int_{C_{\tau'}}}
\define\cinto{\frac 1{2\pi\i}\int_{C_{0}}}
\define\cinttt{\frac 1{24\pi\i}\int_C}
\define\cintd{\frac 1{(2\pi \i)^2}\iint\limits_{C_{\tau}\,C_{\tau'}}}
\define\dintd{\frac 1{(2\pi \i)^2}\iint\limits_{C\,C'}}
\define\cintdr{\frac 1{(2\pi \i)^3}\int_{C_{\tau}}\int_{C_{\tau'}}
\int_{C_{\tau''}}}
\define\im{\operatorname{Im}}
\define\re{\operatorname{Re}}
\define\res{\operatorname{res}}
\redefine\deg{\operatornamewithlimits{deg}}
\define\ord{\operatorname{ord}}
\define\rank{\operatorname{rank}}
\define\fpz{\frac {d }{dz}}
\define\dzl{\,{dz}^\l}
\define\pfz#1{\frac {d#1}{dz}}

\define\K{\Cal K}
\define\U{\Cal U}
\redefine\O{\Cal O}
\define\He{\text{\rm H}^1}
\redefine\H{{\mathrm{H}}}
\define\Ho{\text{\rm H}^0}
\define\A{\Cal A}
\define\Do{\Cal D^{1}}
\define\Dh{\widehat{\mathcal{D}}^{1}}
\redefine\L{\Cal L}
\newcommand{\ND}{\ensuremath{\mathcal{N}^D}}
\redefine\D{\Cal D^{1}}
\define\KN {Kri\-che\-ver-Novi\-kov}
\define\Pif {{P_{\infty}}}
\define\Uif {{U_{\infty}}}
\define\Uifs {{U_{\infty}^*}}
\define\KM {Kac-Moody}
\define\Fln{\Cal F^\lambda_n}
\define\gb{\overline{\mathfrak{ g}}}
\define\G{\overline{\mathfrak{ g}}}
\define\Gb{\overline{\mathfrak{ g}}}
\redefine\g{\mathfrak{ g}}
\define\Gh{\widehat{\mathfrak{ g}}}
\define\gh{\widehat{\mathfrak{ g}}}
\define\Ah{\widehat{\Cal A}}
\define\Lh{\widehat{\Cal L}}
\define\Ugh{\Cal U(\Gh)}
\define\Xh{\hat X}
\define\Tld{...}
\define\iN{i=1,\ldots,N}
\define\iNi{i=1,\ldots,N,\infty}
\define\pN{p=1,\ldots,N}
\define\pNi{p=1,\ldots,N,\infty}
\define\de{\delta}

\define\kndual#1#2{\langle #1,#2\rangle}
\define \nord #1{:\mkern-5mu{#1}\mkern-5mu:}
\newcommand{\MgN}{\mathcal{M}_{g,N}} 
\newcommand{\MgNeki}{\mathcal{M}_{g,N+1}^{(k,\infty)}} 
\newcommand{\MgNeei}{\mathcal{M}_{g,N+1}^{(1,\infty)}} 
\newcommand{\MgNekp}{\mathcal{M}_{g,N+1}^{(k,p)}} 
\newcommand{\MgNkp}{\mathcal{M}_{g,N}^{(k,p)}} 
\newcommand{\MgNk}{\mathcal{M}_{g,N}^{(k)}} 
\newcommand{\MgNekpp}{\mathcal{M}_{g,N+1}^{(k,p')}} 
\newcommand{\MgNekkpp}{\mathcal{M}_{g,N+1}^{(k',p')}} 
\newcommand{\MgNezp}{\mathcal{M}_{g,N+1}^{(0,p)}} 
\newcommand{\MgNeep}{\mathcal{M}_{g,N+1}^{(1,p)}} 
\newcommand{\MgNeee}{\mathcal{M}_{g,N+1}^{(1,1)}} 
\newcommand{\MgNeez}{\mathcal{M}_{g,N+1}^{(1,0)}} 
\newcommand{\MgNezz}{\mathcal{M}_{g,N+1}^{(0,0)}} 
\newcommand{\MgNi}{\mathcal{M}_{g,N}^{\infty}} 
\newcommand{\MgNe}{\mathcal{M}_{g,N+1}} 
\newcommand{\MgNep}{\mathcal{M}_{g,N+1}^{(1)}} 
\newcommand{\MgNp}{\mathcal{M}_{g,N}^{(1)}} 
\newcommand{\Mgep}{\mathcal{M}_{g,1}^{(p)}} 
\newcommand{\MegN}{\mathcal{M}_{g,N+1}^{(1)}} 

\define \sinf{{\widehat{\sigma}}_\infty}
\define\Wt{\widetilde{W}}
\define\St{\widetilde{S}}
\newcommand{\SigmaT}{\widetilde{\Sigma}}
\newcommand{\hT}{\widetilde{\frak h}}
\define\Wn{W^{(1)}}
\define\Wtn{\widetilde{W}^{(1)}}
\define\btn{\tilde b^{(1)}}
\define\bt{\tilde b}
\define\bn{b^{(1)}}
\define \ainf{{\frak a}_\infty} 

%
\define\eps{\varepsilon}    
\newcommand{\e}{\varepsilon}
\define\doint{({\frac 1{2\pi\i}})^2\oint\limits _{C_0}
       \oint\limits _{C_0}}                            
\define\noint{ {\frac 1{2\pi\i}} \oint}   
\define \fh{{\frak h}}     
\define \fg{{\frak g}}     
\define \GKN{{\Cal G}}   
\define \gaff{{\hat\frak g}}   
\define\V{\Cal V}
\define \ms{{\Cal M}_{g,N}} 
\define \mse{{\Cal M}_{g,N+1}} 
\define \tOmega{\Tilde\Omega}
\define \tw{\Tilde\omega}
\define \hw{\hat\omega}
\define \s{\sigma}
\define \car{{\frak h}}    
\define \bor{{\frak b}}    
\define \nil{{\frak n}}    
\define \vp{{\varphi}}
\define\bh{\widehat{\frak b}}  
\define\bb{\overline{\frak b}}  
\define\KZ{Knizhnik-Zamolodchikov}
\define\ai{{\alpha(i)}}
\define\ak{{\alpha(k)}}
\define\aj{{\alpha(j)}}
\newcommand{\calF}{{\mathcal F}}
\newcommand{\ferm}{{\mathcal F}^{\infty /2}}
\newcommand{\Aut}{\operatorname{Aut}}
\newcommand{\End}{\operatorname{End}}
\newcommand{\laxgl}{\overline{\mathfrak{gl}}}
\newcommand{\laxsl}{\overline{\mathfrak{sl}}}
\newcommand{\laxso}{\overline{\mathfrak{so}}}
\newcommand{\laxsp}{\overline{\mathfrak{sp}}}
\newcommand{\laxs}{\overline{\mathfrak{s}}}
\newcommand{\laxg}{\overline{\frak g}}
\newcommand{\bgl}{\laxgl(n)}
\newcommand{\tX}{\widetilde{X}}
\newcommand{\tY}{\widetilde{Y}}
\newcommand{\tZ}{\widetilde{Z}}


\title[]{Quantization of Lax integrable systems \\ and Conformal Field Theory}
\author[O.K.Sheinman]{O.K.Sheinman}
\maketitle
\begin{abstract}
We present the correspondence between Lax integrable systems with spectral parameter on a Riemann surface, and Conformal Field Theories, in quite general set-up suggested earlier by the author. This correspondence turns out to give a prequantization of the integrable systems in question.
\end{abstract}
\tableofcontents
\section{Introduction}

We address here the following problem: given a Lax integrable
system with a spectral parameter on a Riemann surface, to construct a unitary
projective representation of the corresponding Lie algebra of
Hamiltonian vector fields. For the Lax equations in question, we
propose a way to represent Hamiltonian vector fields by covariant
derivatives with respect to the Knizhnik--Zamolodchikov
connection. This is a Dirac-type prequantization from the physical
point of view. As well, this establishes a correspondence between Lax integrable systems and Conformal Field Theories.

The idea of quantization of Hitchin systems by
means of the Knizhnik--Zamolodchikov connection was addressed, or at
least mentioned, many times in the theoretical physics literature
(D.Ivanov \cite{D_I}, G.Felder and  Ch.Wieczerkowski \cite{FeW},
M.A.Olshanet\-sky and A.M.Levin \cite{Olsh_Lev, Olsh}) but only
the second order Hamiltonians have been involved. In \cite{Sh_LKZ} such relation was observed for all Hamiltonians, and, moreover, for all classical observables of the Hamiltonian systems given by the Lax equations in question. The observation was based on the theory of Lax integrable systems with spectral parameter on a Riemann surface, originally due to I.Krichever \cite{Klax} and later developed in \cite{KrSh,Sh_DGr,Shein_UMN2016}, and on the global operator approach to quantization of strings \cite{KNFa,KNFb,KNFc,WZWN2,Schlich_DGr,Sh_DGr}. The proof of unitarity in \cite{Sh_LKZ} relied on the Poincar\'e theorem on absolute invariants (see \refS{Unit} below).

Though I~revised \cite{Sh_LKZ} twice \cite{Sh_DGr,Shein_UMN2016}, I was never satisfied with its presentation. In this paper, taking account of recent developments on spectral curves of Hitchin systems \cite{Shein_FAN19,Sh_Bor_19},   and of results by M.Schlichenmaier on almost graded structures on multipoint Krichever--Novikov algebras and modules over them (\cite{Schlich_DGr} and references therein), I do the next attempt. Compared to previous ones, I have restricted myself with the Lax systems whose spectral curves admit holomorphic involutions. It is a particular class, though broad enough (see \refS{CFT} for the discussion). For the beginning it seemed to be a technical restriction, but systematic usage of an involution on the spectral curve finally enabled me to better relate the present subject to the above mentioned Krichever--Novikov works on the operator quantization of string \cite{KNFa,KNFb,KNFc}, and to the results due to Schlichenmaier on almost graded structures \cite{Schlich_DGr}.

Other approaches to quantization of integrable systems (mainly, Hitchin systems) include geometric quantization (N.Hitchin \cite{Flat_conn}, J.E.Andersen), Berezin--Toeplitz and deformation quantizations (J.E.Andersen, M.Schlichenmaier), quantum integrable systems (B.Feigin and E.Frenkel \cite{FF}, A.Beilinson and V.Drinfeld \cite{BD}, A.P.Veselov,
A.N.Se\-r\-ge\-ev, G.Felder, M.V.Feigin). Compared to those approaches, we to a greater extend exploit the Lax nature of the integrable systems in question. The relation of our approach to the one of Hitchin is the same as the relation of the Knizhnik--Zamolodchikov connection to the Hitchin connection. Compared to quantum integrable systems, we prequantize
the whole algebra of observables rather than  any commutative
subalgebra of it.

The present paper is a revised version of the talk by the author at the conference "Homotopy algebras, deformation theory and quantization", S.Banach International Mathematical Center, Bedlewo, Poland, 2018. The author is grateful to the organizers for the invitation and for an excellent opportunity to revisit the subject. I am deeply grateful to the referee of my contribution to Proceedings of the conference for  valuable (though, anonymous) discussion.

\section{Lax integrable systems with a spectral parameter on a Riemann surface}
The long term development of the theory of Lax integrable systems with a finite number of degrees of freedom suggests the following general description of their structure.

We start with a quadruple $\{\Sigma,\Pi,\g,V\}$ where $\Sigma$ is a Riemann surface, $\Pi\subset\Sigma$ is a finite set of marked points, $\g$ is a semi-simple Lie algebra over $\C$, $V$ is its faithful module. Let $G$ be the Chevalley group corresponding to the pair $\{\g,V\}$ \cite{Stein}, $\Pi=\{ P_1,\ldots,P_N \}$.

Given such quadruple, we define first the space of flag configurations. By a \emph{flag configuration}, we mean

1) an ordered set of points $\Gamma=\{\ga_s\in\Sigma\ |\ s=1,\ldots, K\}$ (mutually distinct with $P$-points).

2) the set of corresponding flags $F_s$, $s=1,\ldots, K$ in $V$.

By the \emph{space of flag configurations} we mean the set of all flag configurations with the same $K$, such that $F_s$ and $F_s'$ belong to the same $G$-orbit in the corresponding flag space for every $s=1,\ldots, K$.

We think of the flags $F_s$ as of "located at the points $\ga_s$":
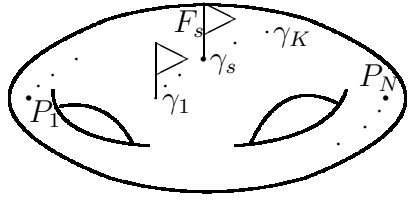
\begin{figure}[h]
\begin{picture}(150,90)
\unitlength=0.6pt
\thicklines
%
\qbezier(50,20)(-75,70)(50,120) 
\qbezier(175,20)(300,70)(175,120) 
\qbezier(50,120)(112,138)(175,120) 
\qbezier(50,20)(112,2)(175,20) 
%
%
\qbezier(15,75)(15,45)(75,40) 
\qbezier(20,65)(50,70)(65,42) 
%
%
\qbezier(130,40)(190,45)(200,75) 
\qbezier(140,42)(160,85)(196,66) 
%
\put(0,70){\circle*{4}}
\put(0,55){$P_1$}
\put(6,78){\circle*{2}}
\put(15,85){\circle*{2}}
\put(30,95){\circle*{2}}
%
\put(225,70){\circle*{4}}
\put(208,78){$P_N$}
\put(221,62){\circle*{2}}
\put(212,52){\circle*{2}}
\put(195,41){\circle*{2}}
%
%
\put(80,70){\circle*{2}}
\put(86,78){\circle*{2}}
\put(95,85){\circle*{2}}
\put(110,95){\circle*{4}}
\put(130,105){\circle*{2}}
\put(150,112){\circle*{2}}
%
%
\thinlines
\put(110,95){\line(0,1){35}}
\put(110,130){\line(2,-1){20}}
\put(110,110){\line(2,1){20}}
\put(114,89){$\ga_s$}
\put(91,112){$F_s$}
\put(84,64){$\ga_1$}
\put(80,70){\line(0,1){35}}
\put(80,105){\line(2,-1){20}}
\put(80,85){\line(2,1){20}}
\put(154,106){$\ga_K$}
\end{picture}
\caption{Base curve with a given flag configuration}\label{}
\end{figure}

Given a flag configuration, we define the corresponding \emph{Lax operator algebra}.

For every $s$ we define a filtration $\{0\}\subseteq\g_{s,-k_s}\subseteq\ldots\subseteq\g_{s,i} \subseteq\g_{s,i+1}\subseteq\ldots\g_{s,k_s}=\g$ in $\g$ given by $\g_{s,i}F_{s,j}\subseteq F_{s,\, i+j}$ for all $j$.

\begin{definition}\label{D:LOA}
The linear space $\L$ defined by
\begin{itemize}
\item[$1^\circ$] $\L\subset Mer(\Sigma\to\g)\cap H^0(\Sigma\setminus(\Pi\cup\Gamma))$
\item[$2^\circ$]  $\forall L\in\L$, $\forall s=1,\ldots, K$, the Laurent expansion of $L$ in the neighborhood of $\ga_s$ has the form
    \[
        L_s(z)=\sum\limits_{i=-k_s}^\infty L_{s,i}z^i
    \]
where $L_{s,i}\in\g_{s,i}$ $\forall i$, $z$ is a local coordinate in the neighborhood of $\ga_s$,
\end{itemize}
supplied with the structure of point-wise commutator is called a Lax operator algebra.
\end{definition}
Thus we obtained a \emph{sheaf of Lax operator algebras} over the space of flag configurations. We denote it by $\L$ also.

Let $D$ be a non-negative divisor supported at $\Pi$: $D=m_1P_1+\ldots+m_NP_N$ where $m_1,\ldots,m_N\ge 0$. Let $\L^D=\{ L\in\L\ |\ (L)+\sum\limits_{s=1}^K k_s\ga_s+D\ge 0 \}$. Obviously, $\L^D$ is a finite rank subsheaf in $\L$ (observe that, by definition, $k_s\ge 0$). A point of the total space of the sheaf is given by a pair $\{\{\ga_s,F_s\},L\}$ where $\{\ga_s,F_s\}$ is a flag configuration representing a point of the base, $L$ is a meromorphic function  satisfying the requirements in \refD{LOA}, representing a point of the fibre over $\{\ga_s,F_s\}$. The group $G$ operates on the total space of the sheaf $\L^D$ as follows: $g\{\{\ga_s,F_s\},L\}=\{\{\ga_s,gF_s\}, gLg^{-1}\}$. In particular, $G$ operates on the base by rotation of flags leaving the corresponding points to be fixed.
\begin{definition}\label{D:PhSp}
The quotient of the total space of the sheaf $\L^D$ by the just defined $G$-action serves as a \emph{phase space} of the dynamical system we are going to define. Let this space be denoted by $\P^D$. Thus $\P^D=\L^D/G$.
\end{definition}

The dynamics on the phase space will be given by means of a Lax equation. It is our next step to introduce it.

By \emph{Lax operator} we mean the map of the total space of the sheaf $\L^D$ to $Mer(\Sigma\to\gl (V))$ given by $\{\{\ga_s,F_s\},L\}\to L$, i.e. the map forgetting the base component of a point in $\L^D$. Taking account of $G$-equivariance of that map, we regard to $L$ as to a function on the phase space. In abuse of notation, we will denote the Lax operator by~$L$.

We introduce the sheaf $\M$ similarly to $\L$ with the following distinction. In the \refD{LOA}, we replace $\L$ with $\M$, $L$ with $M$, and the item $2^\circ$ with the following requirement:
\[
   M_s(z)=\frac{\nu h_s}{z}+\sum_{i=-k}^\infty M_{s,i}z^i
\]
where $M_{s,i}\in\g_{s,i}$ for $i<0$, $M_{s,i}\in\g$ for $i\ge 0$, $\nu\in\C$,  $h_s$ is the unique semi-simple element leaving invariant the filtration $\{\g_{s,i}   \}$ and such that ${\ad}\,h_s|_{\g_{s,i}/\g_{s,i+1}}=i\cdot{\rm id}$. Further on, we introduce the sheaf $\M^D$, and the forgetting map on it similarly to $\L^D$. By \emph{$M$-operator} we call the forgetting map on $\M^D$, and denote it by $M$.

The equation
\begin{equation}
 {\dot L} = [L,M]
\end{equation}
where ${\dot L}= dL/dt$, is called the \emph{Lax equation}. This is a system of ordinary differential equations for the dynamical variables including $h$ and the parameters of the principal parts of the meromorphic functions $L$ and $M$ at the points in $\Pi$ and $\Gamma$.

For this system to be closed, it is necessary to give M as a function of L, see \cite{Shein_UMN2016,Sh_DGr}.

\section{Conformal field theory related to a Lax integrable system}
\label{S:CFT}

By \emph{conformal field theory} (CFT) we mean a family of Riemann surfaces, a finite rank bundle (of {\it conformal blocks}) on this family, and a projectively flat connection on this bundle (c.f. \cite{FrSh}). Conventionally, a moduli space of Riemann surfaces with marked points is considered as such family. The principal idea is that they invent a conformal structure on the Riemann surface to make use of advantages of complex algebraic geometry and calculus, and a flat connection on the space of all conformal structures to make the results independent of the choice of any of them. Instead, we consider here the family of spectral curves (figure \ref{Spectral} below) over the phase space $\P^D$  of the integrable system  defined in the previous section.

For every $L\in\P^D$ the curve $\Sigma_L$ given by the equation $\det(L(z)-\l)=0$ is called a {\it spectral curve} of $L$. It is a $n$-fold branch covering of $\Sigma$ where $n=\dim V$.

Thus we have obtained a family of Riemann surfaces over $\P^D$. We may identify this family with $\P^D$ itself.

The conformal field theory we are going to associate with a given integrable system is not of general type. It will be a Wess--Zumino--Novikov--Witten model, which means that there are two (sheaves of) algebras called \emph{gauge} and \emph{conformal} algebras, resp., a  sheaf of vacuum modules over them, the  sheaf of conformal blocks is a certain quotient of the last, and the connection is of {\it Knizhnik--Zamolodchikov} type (c.f. \cite{rTUY,Schlich_DGr,Sh_DGr}).

We choose the algebra $\A_L$ of \emph{Krichever--Novikov functions} on~$\Sigma_L$ as a \emph{gauge algebra} at a point $L$. Let $\Pi_L$ denote the full preimage of $\Pi$ on~$\Sigma_L$. $\A_L$ is defined as the algebra of all meromorphic functions on~$\Sigma_L$, holomorphic except at the points in $\Pi_L$. We regard $\A_L$ as an associative algebra, as well as a commutative Lie algebra.

As a \emph{conformal algebra}, we choose the Lie algebra $\V_L$ of \emph{Krichever--Novikov vector fields} on~$\Sigma_L$. By that, we mean meromorphic vector fields, holomorphic except at the points in $\Pi_L$. We refer to \cite{Schlich_DGr,Sh_DGr} for a detailed presentation of Krichever--Novikov algebras. With every representation of $\Pi_L$ as a disjoint union of two subsets $\rm{In}_L$ and $\rm{Out}_L$ one can associate a structure of \emph{almost graded algebras} (Lie algebras, resp.) on $\A_L$ and $\V_L$, i.e. represent each of them as a direct sum of subspaces
\[
   \A_L=\bigoplus_{i=-\infty}^\infty \A_{L,i},\quad \V_L=\bigoplus_{j=-\infty}^\infty \V_{L,j}
\]
where $\dim\A_{L,i}<\infty$, $\dim\V_{L,j}<\infty$ for all $i,j$, and there exist $M,M'\in\Z_+$ (independent of $i$, $j$) such that
\[
  \A_{L,i}\A_{L,j}\subseteq\sum\limits_{m=i+j}^{i+j+M}\A_{L,m} , \quad [\V_{L,i},\V_{L,j}]\subseteq\sum\limits_{m=i+j}^{i+j+M'}\V_{L,m}.
\]
This enables us to speak of \emph{vacuum modules} (representations) of those algebras. By that, we mean that the module is generated by an element (\emph{the vacuum}) annihilated by any of the subspaces $\A_{L,i}$, $i>0$ ($\V_{L,j}$, $j>0$, resp.).

Given a splitting $\Pi_L={\rm In}_L\cup{\rm Out}_L$, we also can define a bases in $\A$ and $\V$, called \emph{Krichever--Novikov bases}.  A basis element in $\A$ ($\V$, resp.) is given by a point $P\in{\rm In}_L$ and an integer $i$ called a \emph{degree} of the element; we denote it by $A_{L,P,i}$ ($V_{L,P,i}$, resp.). Up to a normalization, the $A_{L,P,i}$ ($V_{L,P,i}$, resp.) is determined by its orders at the points in $\Pi$ (see details in \cite{KNFb,Schlich_DGr,Sh_DGr}). The basis elements of a given degree $i$ and all $P\in{\rm In}_L$ form a base in $\A_{L,i}$ ($\V_{L,i}$, resp.). More generally, one can define the Krichever--Novikov base in the space of $\l$-forms on $\Sigma_L$ for any half-integer $\l$ as well.

From now on we assume that $\Sigma_L$ admits a holomorphic involution $\s_L$, $\s_L\Pi_L=\Pi_L$, and $\s_L\rm{In}_L=\rm{Out}_L$. In particular, the numbers of elements in $\rm{In}_L$ and $\rm{Out}_L$ are equal. The almost graded structures possessing the last property are especially natural. We refer to \cite{Schlich_DGr} for their description.

An involution on the spectral curve comes either from the underlined Lie algebra or from the base curve. The examples of the first type include, among others, Hitchin systems with gauge groups $SO(2n)$, $SO(2n+1)$, $Sp(2n)$. In this case the involution permutes the branches of the spectral curve corresponding to the characteristic roots of opposite signs. The examples of the second type include Hitchin systems with the gauge group $GL(n)$, on a hyperelliptic curve, with $\Pi$ invariant with respect to the hyperelliptic involution.

Given a splitting $\Pi_L={\rm In}_L\cup{\rm Out}_L$, by the \emph{dual splitting} we mean the splitting where the sets ${\rm In}_L$ and ${\rm Out}_L$ are inverted. The corresponding gradings are called \emph{inverse gradings} \cite{Schlich_DGr}. The gauge and conformal algebras  corresponding to the inverse grading are denoted by $\A_L^*$ and $\V_L^*$, resp. The involution $\s_L$ induces two isomorphisms of the almost graded algebras (Lie algebras, resp.): $\s_L:\,\A_L\to\A_L^*$ and $\s_L:\,\V_L\to\V_L^*$.

As \emph{a vacuum module} at the point $L\in\mathcal P$ we choose a semi-infinite wedge space $\F_L=\wedge^{\infty/2}F_L$ for an almost graded module $F_L$ over both $\A_L$ and $\V_L$ \cite{FFx,KaRa,KNFb,Schlich_DGr}. For the last we have a certain selection. The cases described in greater detail are the modules of $\l$-forms on a Riemann surface (on the spectral curve in our case) \cite{KNFb,Schlich_DGr}, and, more generally, modules of sections of holomorphic bundles on the Riemann surface \cite{Sh_DGr}. Together with a pair consisting of an almost-graded module $F_L$ and its semi-infinite wedge space $\F_L$, we will always consider the dual pair of  modules consisting of the contragredient module $F_L^*$ to the module $F_L$, and its semi-infinite wedge space $\F_L^*$.\label{contra} For instance, for the module of $\l$-forms with the almost-grading given by a certain splitting, the dual module is the space of
$(1-\l)$-forms with the almost-grading given by the dual splitting. The Krichever--Novikov bases for modules can be defined as well. For the modules of $\l$- and $(1-\l)$-forms they turn out to be dual with respect to the pairing given by $(\w_\l,\w_{1-\l})=\sum_{P\in{\rm In}}\res_P\w_\l\cdot \w_{1-\l}$. The sum over $\rm In$ can be replaced with the sum over $\rm Out$ with simultaneous change of sign. The pairing between the corresponding semi-infinite degree modules will be defined in \refS{Unit}.

Next we briefly outline the \emph{Sugawara construction} enabling us to turn any projective vacuum $\A$-module to the corresponding $\V$-module. Here we use a "commutative"\ version of the Sugawara construction (see \cite{KNFb, WZWN2,Sh_DGr,Schlich_DGr} for details).

Let $\{ A_{L,P,i}\ |\ P\in{\rm In}_L,\ i\in\Z\}$ be the Krichever--Novikov base in $\A_L$, $\{ \w_{L,P}^j\ |\ P\in{\rm In}_L,\ j\in\Z\  \}$ be the dual base of 1-forms on $\Sigma_L$ (also assumed to be meromorphic and holomorphic outside $\Pi_L$) with respect to the paring $\langle A,\w\rangle = \sum\limits_{Q\in{\rm Out}_L}\res_Q A\w $. Given a vacuum representation $\F$ of the algebra $\A$, let $u(A)$ be the representation operator of an element $A\in\A$. Define the \emph{energy--momentum tensor} $E$ as follows:
\[
  E={1/2}\sum_{i,j=-\infty}^\infty\,\,\sum_{P,P'\in{\rm In_L}}\nord{u(A_{L,P,i})u(A_{L,P',j})}\w_{L,P}^i\w_{L,P'}^j
\]
where the colon brackets mean a normal ordering ($\nord{u(A_{L,P,i})u(A_{L,P',j})}\,=u(A_{L,P,i})u(A_{L,P',j})$ for $i\le j$, and $\nord{u(A_{L,P,i})u(A_{L,P',j})}\,=u(A_{L,P',j})u(A_{L,P,i})$ for $i> j$, except for a finite number of negative values of $i,j$). $E$ is an operator-valued quadratic differential on $\Sigma$. For every $v\in\V$ we define
\[
  T(v)=\nu\cdot\sum_{Q\in{\rm Out}_L}\res_Q(E\cdot v),
\]
where $\nu$ is a normalization constant.
\begin{theorem}
The map $T:\, v\to T(v)$ is a projective representation of $\V_L$:
\[
       T([v,v'])=[T(v),T(v')]+\eta'(v,v')\cdot{\rm id},
\]
($\eta'$ being a 2-cocycle on $\V_L$) related to the representation $u$ of $\A_L$ as follows: for any $v\in\V_L$, $A\in\A_L$
\[
     [T(v),u(A)]=u(\partial_v A).
\]
\end{theorem}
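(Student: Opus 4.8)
The plan is to reduce the statement to the standard (commutative) Sugawara theory on a single Krichever--Novikov curve, here the spectral curve $\Sigma_L$, and then transport it through the involution $\s_L$ where needed. First I would fix the vacuum $\A_L$-module $\F_L$ and the Krichever--Novikov bases $\{A_{L,P,i}\}$ of $\A_L$ and the dual $1$-forms $\{\w_{L,P}^j\}$, so that $\langle A_{L,P,i},\w_{L,P'}^j\rangle=\delta_{PP'}\delta_{ij}$ with respect to the $\rm{Out}_L$-residue pairing; the almost-graded structure (with bounds $M,M'$ from \refS{CFT}) guarantees that in each fixed degree only finitely many terms of the double sum defining $E$ fail to be already normally ordered, so $E$ is a well-defined operator-valued quadratic differential on $\Sigma_L$, and hence $T(v)=\nu\sum_{Q\in\rm{Out}_L}\res_Q(E\cdot v)$ is a well-defined operator on $\F_L$ for each $v\in\V_L$. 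The $\rm{Out}_L$-residue is the natural invariant pairing in this set-up (equivalently one may use $-\sum_{\rm{In}_L}\res$), and the involution $\s_L:\A_L\to\A_L^*$, $\V_L\to\V_L^*$ intertwines the two choices; this is the only place the hypothesis $\s_L\rm{In}_L=\rm{Out}_L$ is needed, namely to identify the pairing used in $E$ with the contragredient pairing on $\F_L^*$.

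Next I would establish the commutation relation $[T(v),u(A)]=u(\partial_v A)$. The mechanism is: $u$ is a Lie-algebra representation of the commutative algebra $\A_L$, so $[u(A_{L,P,i}),u(A_{L,P',j})]$ is a scalar (a multiple of $\rm{id}$) coming from the defining $2$-cocycle of the centrally extended (Heisenberg-type) algebra; therefore $[E,u(A)]$ reduces to a single sum $\sum_{P,i}[u(A_{L,P,i}),u(A)]\,\w_{L,P}^i\cdot(\text{a }1\text{-form})$, which by the cocycle identity equals, up to an exact form, $u$ applied to the $1$-form-valued expression $dA$. Pairing against $v$ by $\sum_{\rm{Out}_L}\res_Q$ and using that $\partial_v A=\langle dA,v\rangle$ in local coordinates, together with the residue theorem on $\Sigma_L$ (the sum of all residues of a meromorphic $1$-form vanishes, and $A$, $v$ are holomorphic off $\Pi_L$), collapses the $\rm{Out}_L$-sum to exactly $u(\partial_v A)$ after absorbing the constant $\nu$. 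This is essentially the computation in \cite{KNFb,Schlich_DGr,Sh_DGr}, carried out verbatim on $\Sigma_L$.

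Then, the projectivity $T([v,v'])=[T(v),T(v')]+\eta'(v,v')\rm{id}$ follows from the relation just proved by a now-standard argument: using $[T(v),u(A)]=u(\partial_v A)$ one shows that $T([v,v'])-[T(v),T(v')]$ commutes with all $u(A)$, $A\in\A_L$; by (almost-graded) irreducibility / cyclicity of $\F_L$ together with a degree/locality bound, an operator commuting with all $u(A)$ and preserving the almost-grading up to finitely many degrees must be a scalar, which defines $\eta'$. That $\eta'$ is a $2$-cocycle on $\V_L$ is forced by the Jacobi identity for the bracket of the $T(v)$'s. I would quote the multipoint almost-graded machinery of Schlichenmaier \cite{Schlich_DGr} for the boundedness estimates that make "commutes with all $u(A)$ $\Rightarrow$ scalar" rigorous in the semi-infinite wedge setting.

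The main obstacle I anticipate is the convergence/finiteness bookkeeping for $E$ and for the bracket $[T(v),T(v')]$ on the semi-infinite wedge space $\F_L=\wedge^{\infty/2}F_L$: one must check that, acting on any fixed semi-infinite monomial, the formally infinite sums truncate, that the normal ordering is compatible with the almost-grading on $F_L$ (not just on $\A_L$), and that reordering the double sum when computing $[T(v),T(v')]$ produces only the finite "anomaly" contributing to $\eta'$ rather than an ill-defined expression. Handling this cleanly requires the bounds $M,M'$ of \refS{CFT} plus the analogous bound for the action of $\A_L$ on $F_L$, and is exactly the point where the almost-graded (rather than graded) nature of the Krichever--Novikov set-up makes the argument more delicate than in the classical affine case; everything else is a transcription of the known Sugawara construction to the curve $\Sigma_L$.
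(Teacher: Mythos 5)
The paper itself gives no proof of this theorem: it is quoted from the Krichever--Novikov/Schlichenmaier literature (\cite{KNFb,KNFc,WZWN2,Schlich_DGr,Sh_DGr}), so your proposal can only be measured against the standard arguments there. Your outline is essentially correct and matches those sources for the first half: well-definedness of $E$ and $T(v)$ from the almost-graded bounds, and $[T(v),u(A)]=u(\partial_v A)$ from the fact that $\A_L$ is commutative, so every bracket $[u(A_{L,P,i}),u(A_{L,P',j})]$ is central and $[E,u(A)]$ collapses via the delta-distribution property of the dual bases $\{A_{L,P,i}\}$, $\{\w_{L,P}^j\}$ and the residue theorem on $\Sigma_L$. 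Where you genuinely diverge is the projectivity statement: you deduce that $T([v,v'])-[T(v),T(v')]$ commutes with all $u(A)$ and conclude it is a scalar, whereas the cited references compute $[T(v),T(v')]$ directly by reordering the double normal-ordered sum and exhibit the anomaly $\eta'$ as an explicit local (geometric) cocycle. Your route is legitimate but shifts the burden onto the statement that the commutant of the centrally extended $\A_L$ in the relevant charge sector of $\wedge^{\infty/2}F_L$ consists of scalars; in the multipoint, higher-genus, almost-graded setting this is itself a nontrivial theorem of the same references (not merely a ``degree/locality bound''), so it must be invoked as such, and the direct computation both avoids this dependence and yields the formula for $\eta'$. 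One smaller misattribution: the involution $\s_L$ and the hypothesis $\s_L{\rm In}_L={\rm Out}_L$ play no role here --- the Sugawara construction and this theorem hold for an arbitrary splitting $\Pi_L={\rm In}_L\cup{\rm Out}_L$, and the involution only enters later, in \refS{Unit}, for unitarity. Your identification of the delicate points (truncation of the infinite sums on a fixed semi-infinite monomial, compatibility of normal ordering with the almost-grading of $F_L$, finiteness of the reordering anomaly) is exactly right and is where the real work in \cite{Schlich_DGr} lies.
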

$T$ is called the \emph{Sugawara representation}.

\section{Conformal blocks and projectively flat connection}\label{S:aff_KN}

Let $X$ be a tangent vector to $\P^D$ at a point $L$. Consider a deformation of the complex structure of the corresponding spectral curve in the direction of $X$. It is given by a Kodaira--Spencer class in $H^1(\Sigma_L, T\Sigma_L)$ where $T\Sigma_L$ is a tangent sheaf on $\Sigma_L$. The cocycle $\rho_L(X)$ representing this class is defined as follows. We consider the family of gluing functions (giving a complex smooth structure on spectral curves) defined in a union of annuli with the "centers" at the points in ${\rm Out}_L$. We denote such gluing function by $d_L$. Regarding $d_L$ as a diffeomorphism of the union of annuli, we set $\rho_L(X)=d_L^{-1}\partial_X d_L$. Apparently, $\rho_L(X)$ is a local vector field in the union. The situation near one of the points $Q^L_j\in{\rm Out}_L$ is illustrated at Figure \ref{Spectral}.

\begin{figure}[h]
\begin{picture}(280,200)
\unitlength=0.8pt
\qbezier(70,120)(105,130)(142,133) 
\qbezier(167,134)(230,140)(350,140) 
\qbezier(70,120)(20,100)(10,40)
\qbezier(10,40)(90,100)(250,0)
\qbezier(250,0)(250,100)(350,140)
\put(280,40){{\Large $\P^D$}}
\put(230,110){\circle*{3}}
\put(220,103){$L$}
\put(230,110){\line(0,1){50}}
\put(230,195){\oval(40,70)}
\qbezier(220,170)(230,160)(240,180) 
\qbezier(225,167)(230,183)(235,174) 
\qbezier(215,200)(220,216)(225,207) 
\qbezier(234,206)(239,215)(244,199)
\qbezier(218,205)(220,200)(222,210) 
\qbezier(236,208)(239,200)(242,201)
\put(185,195){ $\Sigma_L$}
\put(239,223){\circle*{3}}
\put(240,230){$Q^L_j$}
\qbezier(235,230)(230,213)(247,222)
\qbezier(230,230)(225,208)(249,217)
%
\qbezier(265,250)(320,270)(330,235)%
\put(305,250){\circle*{3}}
\qbezier(295,255)(290,240)(318,248)
\qbezier(285,255)(270,230)(326,240)
\put(270,205){ $d_L$}
%
%
\put(290,253){\vector(0,-3){10}}
\put(305,240){\vector(4,1){13}}
\put(310,212){$\rho(X)$}
\put(323,225){\line(-1,4){4}}
%
\put(283,210){\line(1,2){15}}
\put(270,210){\line(-3,1){25}}
%
%
\put(230,110){\vector(3,1){40}}
\put(270,120){{ $X$}}
\put(235,118){\vector(3,1){30}}
\put(235,105){\vector(3,1){30}}
%
\put(80,110){\circle*{3}}
\put(80,110){\line(0,1){50}}
\put(80,195){\oval(40,70)}
\qbezier(70,180)(80,160)(90,170) 
\qbezier(75,174)(80,183)(85,167) 
\qbezier(95,200)(90,216)(85,207) 
\qbezier(76,206)(71,215)(66,199)
\qbezier(92,205)(90,200)(88,210) 
\qbezier(74,208)(71,200)(68,201)
%
%
\put(155,90){\circle*{3}}
\put(155,90){\line(0,1){25}}
\put(155,133){\oval(20,35)}
\qbezier(148,130)(155,110)(163,130) 
\qbezier(152,125)(155,130)(159,125) 
\qbezier(147,137)(150,143)(154,138) 
\qbezier(157,138)(160,143)(164,137)
\qbezier(149,137)(150,133)(152,138) 
\qbezier(159,138)(160,133)(162,137)
\end{picture}
\caption{Family of spectral curves over the phase
space}\label{Spectral}
\end{figure}
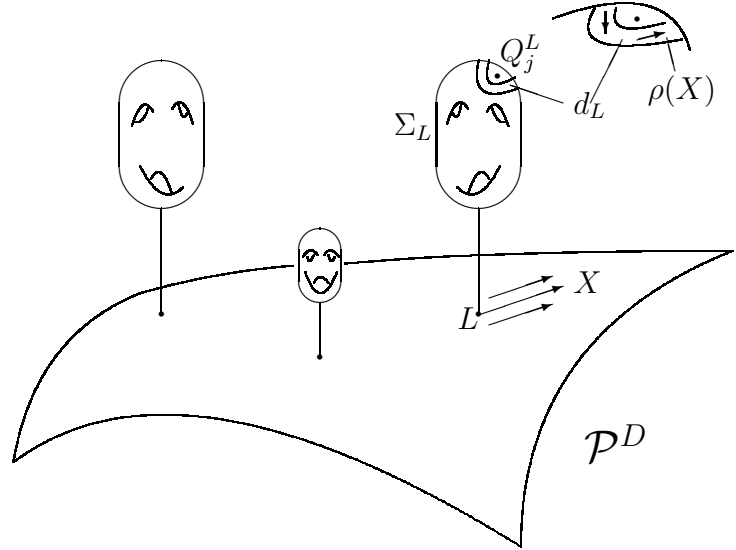

It is shown in \cite{WZWN2,Sh_DGr} that the Kodaira--Spencer cocycle can be represented by a Krichever--Novikov vector field on $\Sigma_L$, and the ambiguity in that representation is compensated by passing to the sheaf of conformal blocks (to be defined below). According to that, below we consider $\rho_L(X)$ as an element of the space $\V_L^{\rm reg}\backslash\V_L/\V_L^+$ where $\V_L^+$ is the direct sum of homogeneous subspaces in $\V_L$ of nonnegative degrees, and $\V_L^{\rm reg}\subset\V_L$ is the subspace of vector fields vanishing at the points in ${\rm Out}_L$. Both these subspaces are Lie subalgebras in $\V_L$.

Let $\F_L$ be a vacuum module of the type discussed in the previous section.

As soon as $\rho_L(X)$ is a Krichever--Novikov vector field, we consider  $T(\rho_L(X))$ where $T$ is the Sugawara representation, and then define the operators
\[  \nabla_{X}=\partial_{X}+T(\rho_L(X)).
\]

Next, we consider the sheaf of $\A_L$-modules $\F_L$ on $\P^D$. Let $\A_L^{reg}\subset\A_L$ be the subalgebra of functions regular at the points in ${\rm Out}_L$. The sheaf $\mathcal C$ of quotient spaces $\F_L/\A_L^{reg}\F_L$ on $\P^D$ is called the sheaf of \emph{coinvariants}, or the sheaf of \emph{conformal blocks} in another terminology. Together with the sheaf $\mathcal C$ we will consider the sheaf ${\mathcal C}^*$ of \emph{dual conformal blocks} which corresponds to the dual module $\F^*_L$ and the algebra $\A^{*reg}_L$. Thus, by definition
\[
     \mathcal C=\F_L/\A_L^{reg}\F_L,\quad \mathcal C^*=\F^*_L/\A_L^{*reg}\F^*_L.
\]
\begin{theorem}[\cite{WZWN2,Sh_DGr,Schlich_DGr}]\label{T:pr_flat} The operators $\nabla_X$ define a projective flat connection $\nabla$ on the sheaf of coinvariants, in particular
\[
 [\nabla_X,\nabla_Y]=\nabla_{[X,Y]}+\l(X,Y)\cdot {\rm id}
\]
where $\l$ is a cocycle on the Lie algebra of tangent vector fields on $\P^D$, ${\rm id}$ is the identity operator.
\end{theorem}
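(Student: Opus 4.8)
My plan is to treat this as the counterpart, for the family of spectral curves over $\P^D$, of the projective flatness of the Knizhnik--Zamolodchikov connection in a Wess--Zumino--Novikov--Witten model built on Krichever--Novikov algebras, and to follow the scheme of \cite{WZWN2,Sh_DGr,Schlich_DGr}, verifying that each ingredient it requires is available here. Two points must be established: first, that each $\nabla_X$ is a well-defined operator on the sheaf of coinvariants $\mathcal C=\F_L/\A_L^{reg}\F_L$, i.e. that it is independent of the Krichever--Novikov representative chosen for the Kodaira--Spencer class $\rho_L(X)$ and that it descends from $\F_L$; second, the commutator identity itself.

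For well-definedness, recall that $\rho_L(X)$ is canonically defined only modulo $\V_L^{reg}+\V_L^+$. If $v\in\V_L^+$, then $T(v)\vac=0$ by the definition of a vacuum module for the Sugawara action. If $v\in\V_L^{reg}$, i.e. $v$ vanishes at the points of $\mathrm{Out}_L$, then I would show $T(v)\F_L\subseteq\A_L^{reg}\F_L$ by moving $T(v)$ to the right through the generators $u(A)$ by means of the relation $[T(v),u(A)]=u(\partial_v A)$ of the Sugawara theorem, using that $\partial_v$ preserves $\A_L^{reg}$ when $v$ vanishes on $\mathrm{Out}_L$ and that the residue-at-$\mathrm{Out}_L$ defining $T(v)$ localizes the energy--momentum tensor there. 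Hence $T(\rho_L(X))$ acts on $\mathcal C$, while $\partial_X$ acts on $\mathcal C$ once the Krichever--Novikov trivialization of $\F_L$ over $\P^D$ is fixed. This is exactly where the standing assumption that $\Sigma_L$ carries an involution $\s_L$ with $\s_L\mathrm{In}_L=\mathrm{Out}_L$ is used: it makes $\#\mathrm{In}_L=\#\mathrm{Out}_L$, so that the dual bases and residue pairings entering $E$ and $T$ are symmetric under the splitting, as in \cite{Schlich_DGr}.

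For the commutator, I would expand
\[
 [\nabla_X,\nabla_Y]=[\partial_X,\partial_Y]+T\bigl(\partial_X\rho_L(Y)-\partial_Y\rho_L(X)\bigr)+[T(\rho_L(X)),T(\rho_L(Y))],
\]
where $[\partial_X,\partial_Y]=\partial_{[X,Y]}$ for the (flat) reference connection, and where the middle term uses the compatibility $\partial_X T(v)=T(\partial_X v)$ of the trivialization with $T$ modulo scalars on $\mathcal C$. Inserting the projectivity $[T(v),T(v')]=T([v,v'])+\eta'(v,v')\cdot\mathrm{id}$ reduces the claim to the Maurer--Cartan ("zero curvature") identity for the family of gluing maps,
\[
 \partial_X\rho_L(Y)-\partial_Y\rho_L(X)+[\rho_L(X),\rho_L(Y)]\equiv\rho_L([X,Y])\pmod{\V_L^{reg}+\V_L^+},
\]
which holds because $\rho_L(X)=d_L^{-1}\partial_X d_L$ is a logarithmic derivative. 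Applying $T$ and using the vanishing on $\mathcal C$ of the contribution of $\V_L^{reg}+\V_L^+$ from the first step, the bracketed terms collapse to $T(\rho_L([X,Y]))=\nabla_{[X,Y]}-\partial_{[X,Y]}$, and what remains is $\lambda(X,Y)\cdot\mathrm{id}$ with $\lambda(X,Y)=\eta'(\rho_L(X),\rho_L(Y))$ up to the said scalars. That $\lambda$ is a $2$-cocycle on the Lie algebra of vector fields on $\P^D$ then follows formally from the Jacobi identity for the operators $\nabla_X$ on $\mathcal C$, the central terms being annihilated by all further brackets.

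The hard part will be the first step: showing that $T(v)$ stabilizes $\A_L^{reg}\F_L$ for $v\in\V_L^{reg}$, and controlling the dependence of the Sugawara operator on $L$ — through $\Sigma_L$, its Krichever--Novikov basis, and the module $\F_L$ — so that it contributes at most a scalar to the curvature (equivalently, that $\partial_X T(v)=T(\partial_X v)$ holds on $\mathcal C$ up to scalars). This is where the almost-graded structure adapted to the involution $\s_L$, in the form developed in \cite{Schlich_DGr}, does the substantive work; the bracket manipulation in the second step is then purely formal.
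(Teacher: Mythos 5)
The paper does not actually prove this theorem: it states it with the citations \cite{WZWN2,Sh_DGr,Schlich_DGr} and, in the paragraph following the statement, merely asserts that the proofs given there for the CFT over the moduli space of pointed curves carry over verbatim to the family of spectral curves over $\P^D$. Your sketch reconstructs exactly that standard argument --- (i) descent of $\partial_X+T(\rho_L(X))$ to $\F_L/\A_L^{reg}\F_L$ independently of the Krichever--Novikov representative of the Kodaira--Spencer class, and (ii) the Maurer--Cartan identity for the logarithmic derivative $d_L^{-1}\partial_X d_L$ combined with the projectivity of the Sugawara representation --- so in approach it coincides with what the paper intends, and is in fact more explicit than anything the paper supplies.

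Two cautions on step (i), which you rightly call the hard part. First, for $v\in\V_L^+$ the observation $T(v)\vac=0$ is not by itself sufficient: to kill the ambiguity on coinvariants you need $T(v)\F_L\subseteq\A_L^{reg}\F_L$ (or the dual statement on conformal blocks), not merely triviality on the vacuum vector; moreover, since $\V_L^+$ includes the degree-zero subspace, one gets $T(v)\vac=c\cdot\vac$ rather than $0$, and this scalar must be absorbed into the projective ambiguity. The actual argument in \cite{WZWN2,Schlich_DGr} propagates $T(v)$ through the generators $u(A)$ via $[T(v),u(A)]=u(\partial_v A)$ together with the almost-graded degree estimates, exactly as you propose for the $\V_L^{reg}$ part. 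Second, the compatibility $\partial_X T(v)\equiv T(\partial_X v)$ modulo scalars conceals the $L$-dependence of the Krichever--Novikov bases entering the energy--momentum tensor $E$; this is where the bulk of the work in the cited references lies, and your sketch correctly defers to that machinery rather than resolving it. Neither point changes the route, but both must be discharged before the sketch becomes a proof.
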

In \cite{WZWN2,Schlich_DGr,Sh_DGr} \refT{pr_flat} is formulated and proved for the conformal field theory on the moduli space of curves with marked points and fixed, up to a certain order, jets at those points. We claim that here (as well as in \cite{Sh_DGr}), the situation is completely similar, and the same proofs are working. Due to this analogy, we call the projective flat connection defined by \refT{pr_flat}, the {\it Knizhnik--Zamolodchikov connection}.

The horizontal sections of the Knizhnik--Zamolodchikov connection are also called \emph{conformal blocks}.

\section{Quantization of commuting Hamiltonians}

Now we are going to introduce a symplectic
structure on $\P^D$ following the lines of \cite{Klax,Sh_DGr,Shein_UMN2016}.

Let $\Psi$ be the $GL(V)$-valued function on $\Sigma$ diagonalizing $L$:
\[
   \Psi L=K\Psi
\]
where $K$ is the diagonal matrix formed by the eigenvalues of $L$.
$\Psi$ is defined modulo permutations of its rows. Let $\d L$,
$\d\Psi$ and $\d K$ denote the external differentials of $L$, $\Psi$ and $K$, respectively, considered as functions on $\P^D$.

Let $\Omega$ be a  2-form on $\P^D$ taking values in
the space of meromorphic functions on $\Sigma$, defined by the
relation
\[
   \Omega=\tr(\d\Psi\wedge\d L\cdot\Psi^{-1}-\d
   K\wedge\d\Psi\cdot\Psi^{-1}).
\]
$\Omega$ does not depend on the order of the eigenvalues, hence it
is well-defined.

Fix a holomorphic differential $\varpi$ on $\Sigma$ and define a
scalar-valued 2-form $\w$ on $\L^D$ by the relation
\[
    \w=-\frac{1}{2}\left( \sum\limits_{s=1}^K \res_{\gamma_s}\Omega \varpi+
    \sum\limits_{P\in\Pi_L}\res_{P}\Omega \varpi\right).
\]
There is another representation for $\Omega$:
\begin{equation}\label{E:KPh}
 \Omega=2\d\,\tr\left( \d\Psi\cdot\Psi^{-1}K\right)
\end{equation}
which implies that $\w$ is apparently closed.
\begin{theorem}[\cite{Klax}]
$\w$ is a symplectic form on $\P^D$.
\end{theorem}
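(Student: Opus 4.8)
The plan is to verify the two defining properties of a symplectic form, closedness and non-degeneracy, using the explicit formulas provided, and to reduce the statement to Krichever's original computation in \cite{Klax}.

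First I would establish closedness. By the alternative representation \eqref{E:KPh}, $\Omega = 2\,\d\,\tr(\d\Psi\cdot\Psi^{-1}K)$, so $\Omega$ is manifestly $\d$-exact as a $2$-form on $\P^D$ with values in meromorphic functions on $\Sigma$. Since the operations $\res_{\gamma_s}$, $\res_P$ and multiplication by the fixed holomorphic differential $\varpi$ are all $\C$-linear and do not involve the variables of $\P^D$, they commute with the external differential $\d$ on $\P^D$. Hence $\w = -\tfrac12\bigl(\sum_s \res_{\gamma_s}\Omega\,\varpi + \sum_{P\in\Pi_L}\res_P\Omega\,\varpi\bigr)$ is itself $\d$-exact, in particular $\d\w = 0$. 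One subtlety to check here is that $\Omega\,\varpi$ really is a meromorphic differential on $\Sigma$ whose only poles lie at the points of $\Gamma$ and $\Pi$, so that the finite sum of residues makes sense; this follows because $L$, $\Psi$, $K$ are meromorphic with controlled pole loci (the filtration conditions at the $\gamma_s$ from \refD{LOA} and the divisor $D$ at $\Pi$), and because $\varpi$ is holomorphic. I would also note that $\w$ descends from $\L^D$ to the quotient $\P^D=\L^D/G$: since the $G$-action is $L\mapsto gLg^{-1}$, $\Psi\mapsto\Psi g^{-1}$, $K\mapsto K$, the combination $\d\Psi\cdot\Psi^{-1}$ and $\d L$ transform by conjugation, the trace and residues are invariant, and $\w$ is $G$-basic.

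Second, and this is the main obstacle, I would prove non-degeneracy. This is where the actual content of \cite{Klax} enters: one introduces Darboux-type coordinates on $\P^D$ adapted to the spectral data. Concretely, one uses the spectral curve $\Sigma_L:\ \det(L(z)-\l)=0$ together with the line bundle (eigenvector bundle) on it as coordinates — the "separation of variables" picture — and shows that in these coordinates $\w$ becomes a sum of terms $\sum \d\l_k\wedge \d z_k$ (pairing a branch point / eigenvalue against a point of the divisor), with the holomorphic differential $\varpi$ entering as the density that makes the count of degrees of freedom match. The residue sum in the definition of $\w$ is precisely what collapses, via the residue theorem on $\Sigma$, to this local pairing at the poles of $\Psi$. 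The key identities to carry out are: (i) $\tr(\d\Psi\wedge\d L\cdot\Psi^{-1}) = \tr(\d K\wedge\d\Psi\cdot\Psi^{-1}) - \tr(\d\Psi\cdot\Psi^{-1}\wedge\d\Psi\cdot\Psi^{-1}K) + \d\,\tr(\cdots)$, obtained by differentiating $\Psi L\Psi^{-1}=K$; and (ii) the observation that the poles of $\d\Psi\cdot\Psi^{-1}$ away from $\Pi\cup\Gamma$ occur exactly at the intersection points of the eigenvector bundle divisor, where the eigenvalues become the natural conjugate variables. Matching these against the filtration data at the $\gamma_s$ — where the leading coefficients $L_{s,-k_s}\in\g_{s,-k_s}$ are nilpotent and contribute the flag moduli — shows that $\w$ pairs the tangent directions of $\P^D$ perfectly.

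I expect the non-degeneracy step to be the genuine difficulty, because it requires identifying the correct coordinate system (spectral curve plus eigenvector line bundle, modulo the flag data at $\Gamma$) and checking that the dimension count closes — the residue at each $\gamma_s$ must contribute exactly $\dim(G/P_s)$ (the flag dimension), and the residues at $\Pi$ together with the genus of $\Sigma_L$ must account for the remaining directions. Since the theorem is attributed to \cite{Klax}, I would in the write-up either reproduce Krichever's argument in the present notation or, more economically, indicate that closedness is immediate from \eqref{E:KPh} as above and that non-degeneracy is exactly the computation of \cite{Klax}, adapted verbatim to the Lax operator algebras of \refD{LOA} since the local structure at $\Gamma$ and $\Pi$ is the same as in the original setting.
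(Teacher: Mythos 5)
The paper does not actually prove this theorem: it is imported verbatim from \cite{Klax}, and the only argument the text supplies is the one-line remark that the representation \eqref{E:KPh} shows $\w$ is exact, hence closed. Your proposal reproduces exactly that remark for closedness and then, correctly, identifies non-degeneracy as the real content and defers it to Krichever's separation-of-variables computation (spectral curve plus eigenvector-bundle divisor as Darboux data), which is indeed how \cite{Klax} proceeds; so in approach you match the paper and go somewhat beyond it. One small inaccuracy to fix: you assert that $\res_{\gamma_s}$ ``does not involve the variables of $\P^D$,'' but the points $\gamma_s$ are themselves dynamical (part of the flag configuration), so the commutation of $\res_{\gamma_s}$ with the external differential $\d$ on $\P^D$ is not automatic and is precisely one of the points Krichever treats with care (e.g.\ by rewriting the residue sum via the residue theorem as a sum over the poles of $\d\Psi\cdot\Psi^{-1}$); the paper itself glosses over this with the word ``apparently.''
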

\noindent We call it {\it Krichever--Phong symplectic form} \cite{Klax,KrPhong}.

Possessing a symplectic structure on $\P^D$, for any $f\in C^\infty(\P^D)$ we can consider the corresponding Hamiltonian vector field $X_f$. The
correspondence $f\to X_f$ is a homomorphism of the
Poisson algebra of classical observables of the Lax integrable
system to the Lie algebra of Hamiltonian vector fields on $\P^D$.

By \refT{pr_flat}, $X\to\nabla_X$ is a projective representation of
the Lie algebra of all smooth vector fields on $\P^D$ in the space of
sections of the sheaf of covariants. Denote this representation by
$\nabla$. The restriction of $\nabla$ to the subalgebra of
Hamiltonian vector fields gives the projective representation of
the last, hence of the Poisson algebra of all classical observables.
\begin{theorem}[\cite{Sh_LKZ,Sh_DGr,Shein_UMN2016}]
If $X$, $Y$ are Hamiltonian vector fields such that their
Hamiltonians Poisson commute then
$[\nabla_X,\nabla_Y]=\l(X,Y)\cdot id$. If the Hamiltonians depend
only on action variables, then $[\nabla_X,\nabla_Y]=0$.
\end{theorem}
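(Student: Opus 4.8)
The plan is to deduce this statement from Theorem~\ref{T:pr_flat} together with the explicit description of the curvature cocycle $\l$ and the geometry of the Krichever--Phong form $\w$. First I would recall that, by Theorem~\ref{T:pr_flat}, for arbitrary tangent vector fields $X,Y$ on $\P^D$ one has $[\nabla_X,\nabla_Y]=\nabla_{[X,Y]}+\l(X,Y)\cdot\mathrm{id}$, so the content of the assertion is twofold: that $\nabla_{[X,Y]}$ acts as a scalar (in fact is zero) when $X,Y$ are commuting Hamiltonian fields, and that $\l(X,Y)$ itself vanishes when $X,Y$ come from action variables. For the first point, the key observation is that if the Hamiltonians $f,g$ Poisson commute then $[X_f,X_g]=X_{\{f,g\}}=X_0=0$ as a vector field on $\P^D$; hence $\nabla_{[X,Y]}=\nabla_0=0$ identically, and $[\nabla_X,\nabla_Y]=\l(X,Y)\cdot\mathrm{id}$ follows immediately. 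This part is essentially formal once one has the Jacobi-type identity relating the Poisson bracket to the commutator of Hamiltonian vector fields.

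The second, substantive part is to show $\l(X,Y)=0$ when the Hamiltonians depend only on the action variables $I_1,\dots,I_r$ of the integrable system. Here I would use that $\l$ is the pullback, along the Kodaira--Spencer map $X\mapsto\rho_L(X)$, of the standard Krichever--Novikov/Virasoro-type cocycle $\eta'$ on $\V_L$ (the cocycle appearing in the Sugawara Theorem and in the curvature of the Knizhnik--Zamolodchikov connection), integrated over a cycle separating $\mathrm{In}_L$ from $\mathrm{Out}_L$. Concretely, $\l(X,Y)$ is computed by a residue/contour integral of a bilinear expression in $\rho_L(X)$, $\rho_L(Y)$ and their derivatives. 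The crucial geometric fact is that a Hamiltonian vector field $X_{I_k}$ associated with an action variable is \emph{tangent to the Liouville torus}, i.e. it moves the point $L$ without changing the spectral curve $\Sigma_L$ — equivalently, along the Jacobian of the (fixed) spectral curve via the Krichever correspondence. Therefore the corresponding deformation of complex structure is trivial: the Kodaira--Spencer class $\rho_L(X_{I_k})\in H^1(\Sigma_L,T\Sigma_L)$ vanishes, so $\rho_L(X_{I_k})$ is a coboundary and can be represented by a vector field in $\V_L^{\mathrm{reg}}+\V_L^+$. Since the cocycle $\eta'$ restricted to the relevant subalgebras is trivial (it is designed to vanish on $\V_L^{\mathrm{reg}}$ and on $\V_L^+$, these being the isotropic pieces for the splitting), we get $\l(X_{I_j},X_{I_k})=0$, and hence $[\nabla_{X_{I_j}},\nabla_{X_{I_k}}]=0$.

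So the skeleton is: (i) translate Poisson-commutation into vanishing of $[X,Y]$ to reduce $[\nabla_X,\nabla_Y]$ to the scalar $\l(X,Y)\cdot\mathrm{id}$; (ii) identify $\l$ with the pullback under $\rho_L$ of the Sugawara cocycle $\eta'$; (iii) for action-variable Hamiltonians, prove $\rho_L(X_{I_k})=0$ in $H^1(\Sigma_L,T\Sigma_L)$ by invoking the Krichever description of the phase space — the flows of the action variables are exactly the linear flows on the Jacobian of the \emph{fixed} spectral curve, hence induce no deformation of $\Sigma_L$; (iv) conclude using that $\eta'$ vanishes on the subalgebra $\V_L^{\mathrm{reg}}+\V_L^+$ through which such $\rho_L(X_{I_k})$ factors. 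The main obstacle I anticipate is step (iii): one must show carefully that the separation-of-variables picture for these Lax systems with spectral parameter on $\Sigma$ identifies the action flows with translations on $\mathrm{Jac}(\Sigma_L)$ at fixed $\Sigma_L$, so that they are "vertical" for the family of spectral curves over $\P^D$ depicted in Figure~\ref{Spectral}; this is where the Krichever--Phong symplectic structure, the angle variables, and the precise form of $M$ as a function of $L$ all enter, and it is the only place where the Lax (rather than purely CFT-theoretic) nature of the system is genuinely used. Once that verticality is established, the vanishing of $\l$ is immediate because the Knizhnik--Zamolodchikov curvature is entirely a function of how the complex structure of $\Sigma_L$ varies.
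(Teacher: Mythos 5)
Your step (i) is exactly the paper's proof of the first assertion: Poisson commutation gives $[X,Y]=0$, so $\nabla_{[X,Y]}=0$ and \refT{pr_flat} yields $[\nabla_X,\nabla_Y]=\l(X,Y)\cdot\mathrm{id}$. For the second assertion you and the paper share the one essential geometric fact — that flows of Hamiltonians depending only on the action variables do not deform the spectral curve, since its defining coefficients are integrals of motion — but you then take a longer and somewhat shakier route to the conclusion. The paper argues directly at the level of the transition functions: invariance of the complex structure gives $\partial_X d_L=0$, hence $\rho_L(X)=d_L^{-1}\partial_X d_L=0$ identically (not merely as a Kodaira--Spencer class), so $\nabla_X=\partial_X$, and then $[\nabla_X,\nabla_Y]=[\partial_X,\partial_Y]=\partial_{[X,Y]}=0$. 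This bypasses entirely your steps (ii) and (iv), which are the weak points of your write-up: the curvature scalar $\l(X,Y)$ is not simply the pullback $\eta'(\rho_L(X),\rho_L(Y))$ of the Sugawara cocycle (it also absorbs the terms $\partial_XT(\rho_L(Y))-\partial_YT(\rho_L(X))$ and the discrepancy between $[\rho_L(X),\rho_L(Y)]$ and $\rho_L([X,Y])$ modulo $\V_L^{\rm reg}$ and $\V_L^+$), and the set $\V_L^{\rm reg}+\V_L^{+}$ on which you want $\eta'$ to vanish is not in general a subalgebra, so "the cocycle vanishes on each piece" does not by itself give what you need. Both issues evaporate once one notes, as the paper does, that $\rho_L(X)$ vanishes on the nose, so no cocycle evaluation is required. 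Your anticipated "main obstacle" (identifying the action flows with translations on $\mathrm{Jac}(\Sigma_L)$) is also heavier machinery than needed: it suffices that functions of the action variables Poisson commute with the coefficients of the characteristic polynomial, hence preserve $\Sigma_L$ and its transition functions.
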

\begin{proof} The first assertion immediately follows from
\refT{pr_flat} since $[X,Y]=0$.

Coming to the second assertion, observe that for a Lax equation the spectral curve is an integral of motion
(more precisely, the coefficients of its equation are integrals).
This means that if $X$ is a Hamiltonian vector field such that its
Hamiltonian depends only on the action variables then the complex
structure, hence the transition functions, are invariant along the
phase trajectories of~$X$. Denote by $d_L$ the transition function
over $L$. By its invariance $\partial_Xd_L=0$, hence
$\rho_L(X)=d_L^{-1}\partial_Xd_L=0$, and $\nabla_X=\partial_X$. Let
$H_X$, $H_Y$ be Hamiltonians depending only on the action
variables. Then
$[\nabla_X,\nabla_Y]=[\partial_X,\partial_Y]=\partial_{[X,Y]}$,
and $[X,Y]=0$. This implies the commutativity of $\nabla_X$,
$\nabla_Y$.
\end{proof}

\section{Unitarity}\label{S:Unit} A goal of the section is to
define an  Hermitian scalar product in the space of sections of the sheaf $\mathcal C$, such that the operators $\nabla_X$ become skew-Hermitian for all Hamiltonian vector fields $X$ on $\P^D$. For this purpose, we first introduce a point-wise pairing of covariants, and then integrate it over the phase space $\P^D$ by an invariant volume form.

Over every point  $L\in\P^D$ we introduce an Hermitian pairing between $\F_L$ and $\F_L^*$ by setting
\begin{equation*}
   \left( f_{i_1}\wedge\ldots\wedge f_{i_k}\wedge f_{i_k+1}\ldots \ |\
   f^*_{j_1}\wedge\ldots\wedge f^*_{j_k}\wedge f^*_{j_k+1}\ldots\right)_L := \d_{i_1j_1}\ldots\d_{i_nj_n}
\end{equation*}
where $\{ f_i\}$, $\{ f_j^*\}$ are dual Krichever--Novikov bases in $F_L$ and $F_L^*$, respectively, $n$ is the number of the exterior cofactor after which  both semi-infinite monomials stabilize, $\d_{ij}$  is the Kronecker symbol. We refer to the end of \refS{CFT} for definitions and notation. In abuse of notation, let a linear operator $\s_L:\, \F_L\to\F_L^*$ be defined by
\[
     \s_L:\, f_{i_1}\wedge\ldots\wedge f_{i_k}\wedge f_{i_k+1}\ldots \to f_{i_1}^*\wedge\ldots\wedge f_{i_k}^*\wedge f_{i_k+1}^*\ldots\ .
\]
Then
\begin{equation}\label{E:pw}
  \langle \phi_1,\phi_2\rangle_L=(\phi_1|\s_L \phi_2)_L,\ \phi_1,\phi_2\in\F_L
\end{equation}
gives a non-degenerate Hermitian bilinear form on $\F_L$.

For the reason $\{f_j\}$ and $\{f_j^*\}$ are dual Krichever--Novikov bases in the  modules  $F_L$ and $F^*_L$, respectively, the scalar product \refE{pw} is the same as introduced  in \cite{KNFb} for an arbitrary genus and the sets ${\rm In}$, ${\rm Out}$ consisting of one point each (the \emph{two-point  case}). A generalization for the \emph{multipoint case}, as well as for another appropriate paring defined in \cite{KNFc}, is given in \cite[Section 7.5]{Schlich_DGr}. The genus zero, two-point case has been considered in \cite{KaRa}.

In the two-point case, for an arbitrary genus, it is stated in \cite{KNFb} that the Sugawara operators $T(v)$ defined in \refS{CFT} are Hermitian with respect to \refE{pw} (a similar statement for a different pairing is formulated in \cite{KNFc}). There is no reason why it could be different in the multipoint case. However, we will formulate this as a
\begin{conjecture}
The Sugawara operators $T(v)$ defined in \refS{CFT} are Hermitian with respect to \refE{pw}.
\end{conjecture}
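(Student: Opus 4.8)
The plan is to reduce the Hermiticity of the Sugawara operators $T(v)$ to the Hermiticity of the current operators $u(A)$, and then verify the latter by a residue computation on the spectral curve $\Sigma_L$. First I would fix $L\in\P^D$ and work entirely on $\Sigma_L$ with its involution $\s_L$ interchanging $\rm{In}_L$ and $\rm{Out}_L$. The key structural input is that $\s_L$ carries the Krichever--Novikov basis $\{A_{L,P,i}\}$ of $\A_L$ to the basis $\{A^*_{L,\s_LP,i}\}$ of $\A_L^*$ attached to the dual splitting, and likewise for the dual $1$-forms $\{\w_{L,P}^j\}$; dually it induces $\s_L:\F_L\to\F_L^*$ as in the excerpt. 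Using the pairing \refE{pw}, one computes the adjoint $u(A)^\dagger$ with respect to $\langle\,\cdot\,,\cdot\,\rangle_L$ directly from the action on semi-infinite monomials: the claim to establish is
\[
   u(A_{L,P,i})^\dagger = u\big(\s_L A_{L,P,i}\big),
\]
i.e. that conjugation by $\s_L$ intertwines the $\A_L$-action on $\F_L$ with the $\A_L^*$-action on $\F_L^*$, up to the sign bookkeeping forced by the normal ordering. This is the multipoint analogue of the two-point statement of \cite{KNFb}; I would check it on monomials, tracking where the "finite number of exceptional negative values" in the normal ordering enter, and confirm the exceptional terms contribute only scalars (which are automatically real).

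Granting the current identity, the Hermiticity of $T(v)$ follows formally from its definition. Recall $E=\tfrac12\sum_{i,j}\sum_{P,P'}\nord{u(A_{L,P,i})u(A_{L,P',j})}\,\w_{L,P}^i\w_{L,P'}^j$ and $T(v)=\nu\sum_{Q\in\rm{Out}_L}\res_Q(E\cdot v)$. Taking the adjoint term by term and using $u(A)^\dagger=u(\s_LA)$, each bilinear block $\nord{u(A_{L,P,i})u(A_{L,P',j})}$ goes to $\nord{u(A^*_{L,\s_LP,i})u(A^*_{L,\s_LP',j})}$ (the normal ordering is symmetric in the two factors, so its adjoint is again a normal-ordered product, again up to scalar corrections), while the coefficient $1$-forms are relabelled by $\s_L$. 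Simultaneously the sum $\sum_{Q\in\rm{Out}_L}\res_Q$ turns into $\sum_{P\in\rm{In}_L}\res_P$, which by the residue theorem on $\Sigma_L$ equals $-\sum_{Q\in\rm{Out}_L}\res_Q$ on exact residues; the duality of the bases with respect to $\langle A,\w\rangle=\sum_{Q\in\rm{Out}_L}\res_Q A\w$ is exactly what makes the reassembled expression equal to $T(v)$ again (with $\nu$ chosen real). Thus $T(v)^\dagger=T(v)$.

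The main obstacle is the first step: controlling the normal-ordering corrections in the multipoint, arbitrary-genus setting. In the two-point case the grading is by a single integer and the "exceptional" set where $\nord{\cdot}$ deviates from the naive prescription is well understood; in the multipoint case one must handle the band structure (the constants $M,M'$) and show that the regularization needed to make $E$ well-defined on $\F_L$ is compatible with $\s_L$, so that no anomalous non-scalar term survives in $T(v)^\dagger-T(v)$. Equivalently, one must show the central extension cocycle governing the Sugawara operators is $\s_L$-symmetric and real-valued on the relevant cycles. I expect this can be done by importing the almost-graded machinery of \cite{Schlich_DGr} — in particular the explicit description there of almost-graded structures with $\s_L\rm{In}_L=\rm{Out}_L$ — together with the genus-zero and two-point computations of \cite{KaRa,KNFb,KNFc}; making this rigorous in full generality is precisely the point on which I have chosen to state the result as a \emph{conjecture} rather than a theorem.
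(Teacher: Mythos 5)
First, a point of orientation: the paper does not actually prove this statement --- it is deliberately left as a conjecture, supported only by the remark that the two-point, arbitrary-genus case is established in \cite{KNFb} and that ``there is no reason why it could be different in the multipoint case.'' So there is no proof in the paper to compare yours against; your proposal has to stand on its own. As a strategy it is the natural one and is consistent with what the paper gestures at: reduce the Hermiticity of $T(v)$ to the behaviour of the currents $u(A)$ under the adjoint induced by $\s_L$, then reassemble $E$ and the residue sum using the duality of the Krichever--Novikov bases and the residue theorem. But you have not closed the gap either: you explicitly defer the essential multipoint normal-ordering and regularization analysis (compatibility of the band structure with $\s_L$, reality of the scalar corrections), and that is exactly the content that would upgrade the conjecture to a theorem. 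As it stands, the proposal is an honest plan, not a proof.

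Beyond that, there is one concrete error to fix before pursuing the plan: your final conclusion $T(v)^{\dagger}=T(v)$ is the wrong form of Hermiticity. The paper's own later use of the conjecture, in the proof of Theorem~\ref{T:unit}, unpacks ``Hermitian'' as $T(v)^{*}=T(\s_L v)$ --- the curved analogue of the Virasoro relation $L_n^{*}=L_{-n}$ --- and the unitarity argument depends on this: one needs $T(\rho(X))^{*}=T(\s_L\rho(X))=-T(\rho(X))$ to obtain skew-Hermiticity of $\nabla_X$. If $T(v)^{*}=T(v)$ held literally for every $v$, that argument would collapse. The slip enters at your reassembly step: once you accept that conjugation by $\s_L$ intertwines the $\A_L$-action on $\F_L$ with the $\A_L^{*}$-action on $\F_L^{*}$, the adjoint of $E$ is the energy--momentum tensor built from the dual bases and the dual splitting, so $\sum_{Q\in{\rm Out}_L}\res_Q(E\cdot v)$ passes to the Sugawara operator of the \emph{inverse} grading evaluated on $\s_L v$, not back to $T(v)$. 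The residue-theorem sign you invoke is precisely what identifies the two Sugawara constructions with one another, but the argument of $T$ after taking adjoints is $\s_L v$, not $v$.
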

In course of the proof of the \refT{unit} below we prove, in particular,
that $\langle\cdot,\cdot\rangle_L$ gives a well-defined point-wise scalar product on
coinvariants.

Given a pair of sections $s_1$, $s_2$ of the sheaf $\mathcal C$ define their scalar product by
\begin{equation}\label{E:inv_pair}
  \langle s_1,s_2\rangle=\int_{\P^D}\langle s_1 ,s_2\rangle_L\frac{\w^p}{p!},\ p=\frac{1}{2}\dim\P^D.
\end{equation}
Let $\L^2({\mathcal C},\w^p/p!)$ be the space of quadratically
integrable sections of the  sheaf $\mathcal C$ with respect to the symplectic volume on $\P^D$.
\begin{theorem}\label{T:unit}
For every Hamiltonian vector field $X$ on $\P^D$ the operator $\nabla_X$ in the subspace of smooth sections  in $\L^2({\mathcal C},\w^p/p!)$ is skew-Hermitian with respect to the scalar product \refE{inv_pair}.
\end{theorem}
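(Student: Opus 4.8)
The plan is to split the Hamiltonian vector field $X$ into two contributions along the lines already isolated in the proof of the previous theorem: the "action-variable" part, for which $\rho_L(X)=0$ and $\nabla_X=\partial_X$, and a transversal part carried by $T(\rho_L(X))$. Skew-Hermiticity of $\nabla_X$ with respect to \refE{inv_pair} should then follow from two separate facts: (i) the first-order operator $\partial_X$ is skew-Hermitian on $\L^2(\mathcal C,\w^p/p!)$ precisely because $X$ is Hamiltonian, so its flow preserves the symplectic volume $\w^p/p!$ (Liouville), and the point-wise form $\langle\cdot,\cdot\rangle_L$ is transported consistently along the flow; (ii) the zeroth-order operator $T(\rho_L(X))$ is Hermitian at each point $L$ by the Conjecture (or, in the two-point case, by the cited result of \cite{KNFb}), and multiplication by $\i$ converts Hermitian to skew-Hermitian. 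So first I would establish that $\langle\cdot,\cdot\rangle_L$ descends to a well-defined pairing on coinvariants $\mathcal C_L=\F_L/\A_L^{reg}\F_L$ against $\mathcal C^*_L$: one checks $\langle \A_L^{reg}\phi_1,\phi_2\rangle_L=0$ by moving the gauge element $u(A)$, $A\in\A_L^{reg}$, across the pairing using that $\s_L$ intertwines $\A_L$ with $\A_L^*$ and that $A$ regular at $\mathrm{Out}_L$ becomes, after $\s_L$, regular at $\mathrm{In}_L$ — a residue computation on $\Sigma_L$ with no boundary terms.

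Next I would treat the first-order part. Writing $\nabla_X=\partial_X+\i\,(\text{Hermitian})$ after absorbing the normalization constant $\nu$ and the factor $2\pi\i$ in the residue, the skew-Hermiticity of $\partial_X$ on sections reduces to the identity
\[
 \int_{\P^D}\bigl(\partial_X\langle s_1,s_2\rangle_L\bigr)\frac{\w^p}{p!}
 = -\int_{\P^D}\langle s_1,s_2\rangle_L\,\partial_X\!\left(\frac{\w^p}{p!}\right),
\]
which is Stokes on $\P^D$ (assuming the sections are in $\L^2$ and smooth, so boundary terms at infinity vanish); and the right-hand side vanishes because $X$ Hamiltonian implies $L_X(\w^p/p!)=0$, i.e. $\partial_X(\w^p/p!)=d(\iota_X(\w^p/p!))$ integrates to zero. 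Here one must be slightly careful that $\partial_X$ acting on the section and $\partial_X$ acting on the point-wise pairing are compatible — this is where the $G$-equivariance built into $\P^D=\L^D/G$ and the flatness of the Krichever--Novikov basis along $\P^D$ enter, guaranteeing that $\s_L$ varies smoothly and the Leibniz rule $\partial_X\langle s_1,s_2\rangle_L=\langle\partial_X s_1,s_2\rangle_L+\langle s_1,\partial_X s_2\rangle_L$ holds.

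Then the zeroth-order part: for the Hamiltonian flows not preserving the complex structure, $\rho_L(X)\neq 0$ and $\nabla_X=\partial_X+T(\rho_L(X))$. By the Conjecture, $T(\rho_L(X))$ is Hermitian with respect to $\langle\cdot,\cdot\rangle_L$ fibrewise; since the defining formula for $T(v)$ carries an overall factor $\nu$, and following \cite{KNFb} one takes $\nu\in\i\R$ in the unitary normalization, $T(\rho_L(X))$ is in fact skew-Hermitian, and integrating over $\P^D$ preserves this. Combining with the first-order part gives the claim. The main obstacle — and the honest gap — is that the Hermiticity of the Sugawara operators in the multipoint case is exactly the stated Conjecture; so the theorem is really conditional on it, and the cleanest honest route is to carry out the reduction above and then invoke the Conjecture as a black box, noting that in the two-point case it is a theorem of \cite{KNFb}. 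A secondary technical point is justifying the Stokes/Liouville step on the possibly non-compact $\P^D$: I would restrict to compactly supported or sufficiently decaying smooth sections (as in the statement, "smooth sections in $\L^2$"), where the boundary integral genuinely vanishes, and remark that density of such sections extends skew-Hermiticity to the relevant domain of $\nabla_X$.
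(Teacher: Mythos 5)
Your treatment of the first-order part and of the descent to coinvariants is essentially sound and matches the paper in spirit: the skew-Hermiticity of $\partial_X$ does come from the Poincar\'e/Liouville invariance of the symplectic volume $\w^p/p!$ under Hamiltonian phase flows, and the point-wise pairing does descend to $\F_L/\A_L^{reg}\F_L$ (the paper argues this by orthogonality of almost-graded components of different degrees rather than by your residue manipulation, but either route is reasonable). You are also right that the whole statement is conditional on the Conjecture, which the paper itself acknowledges.

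The genuine gap is in the zeroth-order part. You argue that $T(\rho_L(X))$ is Hermitian fibrewise by the Conjecture and then becomes skew-Hermitian after ``taking $\nu\in\i\R$ in the unitary normalization''. This does not work, and it is not what the Hermiticity statement of \cite{KNFb} (or the Conjecture) provides. The correct reading, and the one the paper uses, is that the adjoint of $T(v)$ with respect to \refE{pw} is $T(\s_L(v))$, where $\s_L$ is the holomorphic involution of the spectral curve interchanging ${\rm In}_L$ and ${\rm Out}_L$ --- not that $T(v)$ is self-adjoint so that an overall factor of $\i$ flips the sign (such a rescaling is not at your disposal: it would destroy the relation $[T(v),u(A)]=u(\partial_v A)$, and in any case would make $T(\rho_L(X))$ skew-Hermitian for \emph{arbitrary} $X$, whereas the Hamiltonian hypothesis must enter here a second time). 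The missing step is precisely the role of the involution: the Krichever--Phong form is antiinvariant under $\s_L$ (e.g.\ $K\to -K$ for branch-permuting involutions, hence $\Omega\to-\Omega$), while Hamiltonians are $\s_L$-invariant, so $\s_L$ acts on Hamiltonian vector fields as the antiinvolution $X\to -X$ (time reversal of Hamiltonian trajectories); since $\s_L$ is a global automorphism of the spectral curve it commutes with the Kodaira--Spencer map, whence $\s_L(\rho_L(X))\equiv\rho_L(-X)=-\rho_L(X)$ modulo $\V_L^{*reg}$, and therefore $T(\rho_L(X))^*=T(\s_L(\rho_L(X)))=-T(\rho_L(X))$. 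Your proposal never invokes $\s_L$ in this role (only as an intertwiner of $\A_L$ with $\A_L^*$ in the coinvariants step), so the sign you need is not actually accounted for.
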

\begin{remark}
We state this modulo the above conjecture.
\end{remark}
\begin{proof}
First we will prove that the point-wise scalar product on
$\F_L$ is well-defined on coinvariants. We rely on the fact that $\F_L$ is a module of semi-infinite wedge forms, and the different quasihomogeneous almost-graded components of the such module are orthogonal. The point-wise
coinvariants are defined as $\F_L/\A_L^{reg}\F_L$. The degrees of
mon\-omials occuring in the subspace $\A_L^{reg}\F_L$ (including
summands in linear combinations) are obviously less then those of
monom\-ials which form coinvariants. Hence these spaces of
mono\-mials are orthogonal. For this reason the induced scalar product on the
quotient does not depend on the choice of representatives in the
equivalence classes, and we get a well-defined point-wise scalar
product on coinvariants.

Next, we will show that the involution $\s_L$ induces an antiinvolution on Hamiltonian vector fields on $\P^D$. Indeed, the symplectic form is antiinvariant with respect to the involution $\s_L$. For the involutions permuting the branches of the characteristic equation, as discussed in \refS{CFT}, we have $\s_L:\, K\to -K$ in \refE{KPh} which obviously implies $\s_L:\, \Omega\to -\Omega$. For the involutions coming from the hyperelliptic involution on the base the claim is also true for the reason that one of the hyperelliptic coordinates changes its sign \cite{Shein_FAN19}. Besides, the involution is an automorphism of the spectral curve, hence it leaves Hamiltonians invariant. This means that the involution turns a Hamiltonian equation $\dot\xi=\{ H,\xi\}$ to $\dot\xi=-\{ H,\xi\}$ which can be regarded as change of sign of the Hamiltonian, or inversion of time. All in all, the involution takes a Hamiltonian trajectory to the Hamiltonian trajectory with inverse time. On the Hamiltonian vector fields it results in the antiinvolution $v\to -v$.

For a local vector field $X$ on $\P^D$, by $\partial_X$ we mean the
corresponding derivative of local smooth sections of the sheaf~$\mathcal C$. If $X$ is a Hamiltonian vector field then $\partial_X$ is skew-Hermitian with respect to the pairing \refE{inv_pair}. Indeed, by Poincar\'e theorem, the symplectic form and its degrees are absolute integral invariants of Hamiltonian phase flows. Hence $\w^p/p!$ defines an invariant volume form with respect to Hamiltonian phase flows which implies $\partial_X^*=-\partial_X$ (where the upper star denotes the Hermitian conjugation with respect to the pairing \refE{inv_pair}).

By the above conjecture (proven in \cite{KNFb} for the two-point case and arbitrary genus) $T(v)$ is Hermitian with respect to \refE{pw} for all $v\in\V_L$. For $v=\rho(X)$ this reads as $T(\rho(X))^*=T(\s_L(\rho(X)))$. Since for every $L$ both involutions $v\to \s_L(v)$ and $X\to -X$ are induced by the same global automorphism $\s_L$ of the corresponding spectral curve, they commute with the Kodaira--Spencer mapping. For $v=\rho(X)$ this implies  $\s_L(\rho(X))\equiv\rho(-X){\rm mod}\,\V^{*reg}_L$, hence $T(\rho(X))^*=-T(\rho(X))$.

All together,
\[
 (\partial_X+T(\rho(X))^*=-(\partial_{X}+T(\rho(X)),
\]
i.e.
\[
 (\nabla_X)^*=-\nabla_X,
\]
as required.
\end{proof}


\end{document}